\newcommand{\set}[1]{\{ #1 \}}
\newcommand{\val}[1]{\text{val}(#1)}
\def\qed{\rule{0.4em}{1.4ex}}
\newcommand{\N}{\mathbb{N}}
\newcommand{\R}{\mathbb{R}}
\newcommand{\A}{\mathcal{A}}
\newcommand{\B}{\mathcal{B}}
\newcommand{\M}{\mathcal{M}}
\renewcommand{\S}{\mathcal{S}}
\newcommand{\nN}{_{n \in \N}}
\newcommand{\prob}[1]{P_{#1}}
\newcommand{\D}{\mathcal{D}} 
\newcommand{\norm}[1]{||#1||}
\newcommand{\mat}[1]{\M_{Q \times Q}(#1)}
\newcommand{\matS}[1]{\S_{Q \times Q}(#1)}
\newcommand{\tr}[1]{\langle #1 \rangle}
\newcommand{\inter}[1]{\overline{#1}}
\newcommand{\boo}[1]{\pi(#1)}
\renewcommand{\P}{\mathcal{P}}
\newcommand{\PA}{\mathcal{P}A^*}
\newcommand{\PAF}{\PA_f}
\newcommand{\ed}{\text{end}}
\newcommand{\chck}{\text{check}}
\newtheorem{fact}{Fact}
\title{Profinite Techniques for Probabilistic Automata\\
and the Markov Monoid Algorithm\thanks{A preliminary version appeared in the proceedings of STACS'2016~\cite{Fijalkow16}.
This work was supported by The Alan Turing Institute under the EPSRC grant EP/N510129/1.}}
\author{Nathana{\"e}l Fijalkow}
\institute{University of Oxford, United Kingdom}
\begin{document}

\maketitle

\begin{abstract}
We consider the value $1$ problem for probabilistic automata over finite words:
it asks whether a given probabilistic automaton accepts words with probability arbitrarily close to $1$.
This problem is known to be undecidable.
However, different algorithms have been proposed to partially solve it;
it has been recently shown that the Markov Monoid algorithm, based on algebra, 
is the most correct algorithm so far.
The first contribution of this paper is to give a characterisation of the Markov Monoid algorithm.

The second contribution is to develop a profinite theory for probabilistic automata, called the prostochastic theory.
This new framework gives a topological account of the value $1$ problem,
which in this context is cast as an emptiness problem.
The above characterisation is reformulated using the prostochastic theory, allowing us to give a simple and modular proof.
\end{abstract}

\textbf{Keywords:} Probabilistic Automata; Profinite Theory; Topology

\section{Introduction}
Rabin~\cite{Rabin63} introduced the notion of probabilistic
automata, which are finite automata with randomised transitions.
This powerful model has been widely studied since then and has applications,
for instance in image processing, computational biology and speech processing.
This paper follows a long line of work that studies the algorithmic properties of probabilistic automata.
We consider the value $1$ problem:
it asks, given a probabilistic automaton, whether there exist words accepted 
with probability arbitrarily close to $1$.

This problem has been shown undecidable~\cite{GimbertOualhadj10}.
Different approaches led to construct subclasses of probabilistic automata for which the value $1$ problem
is decidable; the first class was $\sharp$-acylic automata~\cite{GimbertOualhadj10},
then concurrently simple automata~\cite{ChatterjeeTracol12} and leaktight automata~\cite{FijalkowGimbertOualhadj12}.
It has been shown in~\cite{FijalkowGimbertKelmendiOualhadj15} that the so-called Markov Monoid algorithm introduced in~\cite{FijalkowGimbertOualhadj12}
is the most correct algorithm of the three algorithms.
Indeed, both $\sharp$-acylic and simple automata are strictly subsumed by leaktight automata,
for which the Markov Monoid algorithm correctly solves the value $1$ problem.

\bigskip
Yet we were missing a good understanding of the computations realised by the Markov Monoid algorithm.
The aim of this paper is to provide such an insight by giving a characterisation of this algebraic algorithm.
We show the existence of \textit{convergence speeds} phenomena, which can be polynomial or exponential.
Our main technical contribution is to prove that the Markov Monoid algorithm captures exactly \textit{polynomial behaviours}.

\bigskip
Proving this characterisation amounts to giving precise bounds on convergences of non-homogeneous Markov chains.
Our second contribution is to define a new framework allowing us to rephrase this characterisation and to give a modular proof for it,
using techniques from topology and linear algebra.
We develop a profinite approach for probabilistic automata, called prostochastic theory.
This is inspired by the profinite approach for (classical) automata~\cite{Almeida05,Pin09,GehrkeGrigorieffPin10},
and for automata with counters~\cite{Torunczyk11}.

\bigskip
Section~\ref{sec:mma} is devoted to defining the Markov Monoid algorithm and stating the characterisation:
it answers ``YES'' if, and only if, the probabilistic automaton accepts some polynomial sequence.

In Section~\ref{sec:prostochastic}, we introduce a new framework, the prostochastic theory, which is used to restate and prove the characterisation.
We first construct a space called the free prostochastic monoid, whose elements are called prostochastic words.
We define the acceptance of a prostochastic word by a probabilistic automaton, 
and show that the value $1$ problem can be reformulated as the emptiness problem for probabilistic automata over prostochastic words.
We then explain how to construct non-trivial prostochastic words, by defining a limit operator $\omega$, leading to the definition of polynomial prostochastic words.
The above characterisation above reads in the realm of prostochastic theory as follows:
the Markov Monoid algorithm answers ``YES'' if, and only if, the probabilistic automaton accepts some polynomial prostochastic word.

Section~\ref{sec:conclusions} concludes by showing how this characterisation, 
combined with an improved undecidability result, 
supports the claim that the Markov Monoid algorithm is \textit{in some sense} optimal.

\section*{Acknowledgments}
This paper and its author owe a lot to Szymon Toru{\'n}czyk's PhD thesis
and its author, to Sam van Gool for his expertise on Profinite Theory,
to Miko{\l}aj Boja{\'n}czyk for his insightful remarks and 
to Jean-{\'E}ric Pin for his numerous questions and comments.
The opportunity to present partial results on this topic in several scientific meetings
has been a fruitful experience, and I thank everyone that took part in it.
Last but not least, the reviewers greatly participated in improving the quality of this paper.

\section{Probabilistic Automata and the Value 1 Problem}
\label{sec:defs}
Let $Q$ be a finite set of states.

A (probability) distribution over $Q$ is a function $\delta : Q \to [0,1]$ such that $\sum_{q \in Q} \delta(q) = 1$.
We denote $\D(Q)$ the set of distributions over $Q$, which we often consider as vectors indexed by $Q$.

For $E \subseteq \R$, we denote $\mat{E}$ the set of (square) matrices indexed by $Q$ over $E$.
We denote $I$ the identity matrix.
A matrix $M \in \mat{\R}$ is stochastic if each row is a distribution over $Q$; 
the subset consisting of stochastic matrices is denoted $\matS{E}$.
The space $\matS{\R}$ is equipped with the norm $\norm{\cdot}$ defined by
$$\norm{M} = \max_{s \in Q} \sum_{t \in Q} |M(s,t)|.$$
This induces the standard Euclidean topology on $\matS{\R}$.
The following classical properties will be useful:

\begin{fact}{(Topology of the stochastic matrices)}
\begin{itemize}
	\item For all matrix $M \in \matS{\R}$, we have $\norm{M} = 1$,
	\item For all matrices $M,M' \in \mat{\R}$, we have $\norm{M \cdot M'} \le \norm{M} \cdot \norm{M'}$,
	\item The monoid $\matS{\R}$ is (Hausdorff) compact.
\end{itemize}
\end{fact}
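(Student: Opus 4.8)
The plan is to verify the three items directly from the definitions, since each reduces to a routine finite-dimensional computation once the conventions are unfolded.

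For the first item, I would simply substitute the defining equalities of a stochastic matrix into the norm. Given $M \in \matS{\R}$, every entry is non-negative, so absolute values play no role, and for each index $j$ the sum $\sum_i M(i,j)$ is precisely the sum prescribed by the stochastic condition, hence equal to $1$. Therefore $\norm{M} = \max_j \sum_i M(i,j) = \max_j 1 = 1$. This item requires no estimate, only the defining equalities.

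For submultiplicativity, I would expand the product entrywise, $(M M')(i,j) = \sum_k M(i,k) M'(k,j)$, and then bound one column at a time. Fixing $j$, the triangle inequality followed by reindexing the double sum gives $\sum_i |(M M')(i,j)| \le \sum_k \big(\sum_i |M(i,k)|\big)\,|M'(k,j)|$. Each inner factor $\sum_i |M(i,k)|$ is a column sum of $M$ and is therefore at most $\norm{M}$, so the right-hand side is dominated by $\norm{M}\sum_k |M'(k,j)| \le \norm{M}\,\norm{M'}$. Taking the maximum over $j$ then yields $\norm{M M'} \le \norm{M}\,\norm{M'}$. The only delicate point is keeping the roles of the summation indices straight, so that each partial sum is correctly recognized as a genuine column sum dominated by the corresponding norm.

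For compactness, I would identify $\mat{\R}$ with the finite-dimensional Euclidean space $\R^{Q \times Q}$, where finiteness of $Q$ is essential, and exhibit $\matS{\R}$ as a closed and bounded subset. Boundedness is immediate, since every entry of a stochastic matrix lies in $[0,1]$, placing $\matS{\R}$ inside the cube $[0,1]^{Q \times Q}$. Closedness follows because $\matS{\R}$ is the intersection of the closed half-spaces $\set{M(i,j) \ge 0}$ with the affine hyperplanes $\set{\sum_j M(i,j) = 1}$, all cut out by continuous coordinate maps. By the Heine--Borel theorem a closed bounded subset of a finite-dimensional real vector space is compact, and completeness then follows since every compact metric space is complete. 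There is no genuine obstacle here: the three claims are routine, and the only things to watch are fidelity to the index convention in the norm, so that the stochastic condition and the submultiplicativity estimate refer to the same axis, and the explicit use of finiteness of $Q$ so that Heine--Borel applies.
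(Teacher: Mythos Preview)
The paper states this as a \emph{Fact} and gives no proof, so there is nothing to compare your argument against. Your verification is the standard elementary one and is correct in substance: the stochastic condition forces the relevant axis sums to equal $1$; submultiplicativity of the column-sum norm follows from the triangle inequality and regrouping exactly as you wrote; and compactness is Heine--Borel applied to a closed bounded subset of $\R^{Q\times Q}$.

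One small internal inconsistency worth cleaning up: in item~1 you invoke $\sum_i M(i,j)=1$ (column sums equal $1$) as the stochastic condition, whereas in item~3 you describe the defining hyperplanes as $\set{\sum_j M(i,j)=1}$ (row sums equal $1$). The paper's own wording is admittedly loose here (``each line sums up to $1$'' together with the column-based norm $\max_j\sum_i M(i,j)$), and you rightly flag at the end that the argument for item~1 only goes through if the stochastic axis and the norm axis coincide. It would be cleaner to fix one reading and use it consistently in both places; the compactness argument is insensitive to the choice, but item~1 is not.
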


\begin{definition}{(Probabilistic automaton)}
A \textit{probabilistic automaton} $\A$ is given by a finite set of states $Q$,
a transition function $\phi : A \to \matS{\R}$,
an initial state $q_0 \in Q$
and a set of final states $F \subseteq Q$.
\end{definition}

Observe that it generalises the definition for classical deterministic automata,
in which transitions functions are $\phi : A \to \matS{\set{0,1}}$.
We allow here the transition functions of probabilistic automata to have arbitrary real values;
when considering computational properties, we assume that they are rational numbers.

A transition function $\phi : A \to \matS{\R}$ naturally
induces a morphism $\phi : A^* \to \matS{\R}$\footnote{Note that we use ``morphism'' for ``monoid homomorphism'' throughout the paper.}.

We denote $\prob{\A}(s \xrightarrow{w} t)$ 
the probability to go from state $s$ to state $t$ reading $w$ 
on the automaton $\A$,
\textit{i.e.} $\phi(w)(s,t)$.
We extend the notation: for a subset $T$ of the set of states,
$\prob{\A}(s \xrightarrow{w} T)$ is defined by $\phi(w)(s,T) = \sum_{t \in T} \phi(w)(s,t)$.
The \emph{acceptance probability} of a word $w \in A^*$ by $\A$ is 
$\prob{\A}(q_0 \xrightarrow{w} F)$, denoted $\prob{\A}(w)$.
In words, the above is the probability that a run starting from the initial state $q_0$
ends in a final state (\textit{i.e.} a state in $F$).
The \emph{value} of a probabilistic automaton $\A$ is $\val{\A} = \sup \set{\prob{\A}(w) \mid w \in A^*}$,
the supremum over all words of the acceptance probability.

\begin{definition}{(Value $1$ Problem)}
The value $1$ problem is the following decision problem: 
given a probabilistic automaton $\A$ as input, determine whether $\val{\A} = 1$,
\textit{i.e.} whether there exist words whose acceptance probability is arbitrarily close to $1$.
\end{definition}

Equivalently, the value $1$ problem asks for the existence of a sequence of words $(u_n)\nN$ such that $\lim_n \prob{\A}(u_n) = 1$.

\section{Characterisation of the Markov Monoid Algorithm}
\label{sec:mma}
The Markov Monoid algorithm was introduced in~\cite{FijalkowGimbertOualhadj12}, we give here a different yet equivalent presentation.
Consider a probabilistic automaton $\A$,
the Markov Monoid algorithm consists in computing, by a saturation process, the Markov Monoid of $\A$.

It is a monoid of Boolean matrices: 
all numerical values are projected to Boolean values.
So instead of considering $M \in \matS{\R}$, we are interested in $\boo{M} \in \mat{\set{0,1}}$,
the Boolean matrix such that $\boo{M}(s,t) = 1$ if $M(s,t) > 0$, and $\boo{M}(s,t) = 0$ otherwise.
Hence to define the Markov Monoid,
one can consider the underlying non-deterministic automaton $\boo{\A}$
instead of the probabilistic automaton $\A$.
Formally, $\boo{\A}$ is defined as $\A$, except that its transitions are given by $\boo{\phi(a)}$ for the letter $a \in A$.

The Markov Monoid of $\boo{\A}$ contains the transition monoid of $\boo{\A}$,
which is the monoid of Boolean matrix generated by $\set{\boo{\phi(a)} \mid a \in A}$.
Informally speaking, the transition monoid accounts for 
the Boolean action of every finite word.
Formally, for a word $w \in A^*$, the element $\tr{w}$ of 
the transition monoid of $\boo{\A}$
satisfies the following: $\tr{w}(s,t) = 1$ if, and only if, there exists
a run from $s$ to $t$ reading $w$ on $\boo{\A}$.

The Markov Monoid extends the transition monoid
by introducing a new operator, the stabilisation.
On the intuitive level first: let $M \in \matS{\R}$, 
it can be interpreted as a Markov chain;
its Boolean projection $\boo{M}$ represents the structural properties of this Markov chain.
The stabilisation $\boo{M}^\sharp$ accounts for $\lim_n M^n$,
\textit{i.e.} the behaviour of the Markov chain $M$ in the limit.

To give the formal definition of the stabilisation operator, we need a few more notations.
As a convention, $M$ denotes a matrix in $\matS{\R}$, and $m$ a Boolean matrix.
Note that when considering stochastic matrices we compute in the real semiring,
and when considering Boolean matrices, we compute products in the Boolean semiring,
leading to two distinct notions of idempotent matrices.

The following definitions mimick the notions of recurrent and transient states from Markov chain theory.

\begin{definition}{(Idempotent Boolean matrix, recurrent and transient state)}
Let $m$ be a Boolean matrix.
It is idempotent if $m \cdot m = m$.

Assume $m$ is idempotent. We say that:
\begin{itemize}
	\item the state $s \in Q$ is $m$-recurrent if for all $t \in Q$, if $m(s,t) = 1$, then $m(t,s) = 1$,
and it is $m$-transient if it is not $m$-recurrent,

	\item the $m$-recurrent states $s,t \in Q$ belong to the same recurrence class if $m(s,t) = 1$.
\end{itemize}
\end{definition}

\begin{definition}{(Stabilisation)}
Let $m$ be a Boolean idempotent matrix.

The stabilisation of $m$ is denoted $m^\sharp$ and defined by:
$$m^\sharp(s,t) = 
\begin{cases}
1 & \textrm{if } m(s,t) = 1 \textrm{ and } t \textrm{ is } m\textrm{-recurrent,} \\
0 & \textrm{otherwise.}
\end{cases}$$
\end{definition}

The definition of the stabilisation
matches the intuition that
in the Markov chain $\lim_n M^n$,
the probability to be in non-recurrent states converges to $0$.

\begin{definition}{(Markov Monoid)}
The Markov Monoid of an automaton $\A$ is the smallest set of Boolean matrices
containing $\set{\boo{\phi(a)} \mid a \in A}$ closed under product and stabilisation of idempotents.
\end{definition}

\begin{algorithm}[ht]
\caption{The Markov Monoid algorithm.}
\label{algo:markov_monoid}
\SetAlgoLined
\KwData{A probabilistic automaton.}
     
$\M \gets \set{\boo{\phi(a)} \mid a \in A} \cup \set{I}$.

\Repeat{there is nothing to add}{
	\If{there is $m,m' \in \M$ such that $m \cdot m' \notin \M$}{
		add $m \cdot m'$ to $\M$
		}

	\If{there is $m \in \M$ such that $m$ is idempotent and $m^\sharp \notin \M$}{
		add $m^\sharp$ to $\M$
		}
}

\eIf{there is a value $1$ witness in $\M$}{
	return YES\;}{
	return NO\;}{
}
\end{algorithm}

On an intuitive level, a Boolean matrix in the Markov Monoid reflects
the asymptotic behaviour of a sequence of finite words.

The Markov Monoid algorithm computes the Markov Monoid,
and looks for \textit{value $1$ witnesses}:

\begin{definition}{(Value 1 witness)}
Let $\A$ be a probabilistic automaton.

A Boolean matrix $m$ is a value $1$ witness if:
for all states $t \in Q$, if $m(q_0,t) = 1$, then $t \in F$.
\end{definition}

The Markov Monoid algorithm answers ``YES'' 
if there exists a value $1$ witness in the Markov Monoid,
and ``NO'' otherwise.

\bigskip
Our main technical result is the following theorem, which is a characterisation of the Markov Monoid algorithm.
It relies on the notion of \textit{polynomial sequences of words}.

We define two operations for sequences of words, mimicking the operations of the Markov Monoid.
\begin{itemize}
	\item the first is concatenation: given $(u_n)\nN$ and $(v_n)\nN$, the concatenation is the sequence $(u_n \cdot v_n)\nN$,
	\item the second is iteration: given $(u_n)\nN$, its iteration is the sequence $(u_n^n)\nN$;
	the $n$\textsuperscript{th} word is repeated $n$ times.
\end{itemize}

\begin{definition}{(Polynomial sequence)}
The class of polynomial sequences is the smallest class of sequences containing the constant sequences $(a)\nN$ for each letter $a \in A$
and $(\varepsilon)\nN$, closed under concatenation and iteration.
\end{definition}

A typical example of a polynomial sequence is $((a^n b)^n)\nN$,
and a typical example of a sequence which is not polynomial is $\left((a^n b)^{2^n}\right)\nN$.

We proceed to our main result:

\begin{theorem}{(Characterisation of the Markov Monoid algorithm)}
\label{thm:characterisation_mma}
The Markov Monoid algorithm answers ``YES'' on input $\A$
if, and only if,
there exists a polynomial sequence $(u_n)\nN$ such that $\lim_n \prob{\A}(u_n) = 1$.
\end{theorem}

This result could be proved directly, without appealing to the prostochastic theory developed in the next section.
The proof relies on technically intricate calculations over non-homogeneous Markov chains;
the prostochastic theory allows to simplify its presentation, making it more modular.
We will give the proof of Theorem~\ref{thm:characterisation_mma} in Subsection~\ref{subsec:prostochastic:reformulation_characterisation},
after restating it using the prostochastic theory.

A second advantage of using the prostochastic theory is to give a more natural and robust definition of polynomial sequences,
which in the prostochastic theory correspond to polynomial prostochastic words.

A direct corollary of Theorem~\ref{thm:characterisation_mma} is the absence of false negatives:

\begin{corollary}{(No false negatives for the Markov Monoid algorithm)}

If the Markov Monoid algorithm answers ``YES'' on input $\A$,
then $\A$ has value $1$.
\end{corollary}

\section{The Prostochastic Theory}
\label{sec:prostochastic}
In this section, we introduce the prostochastic theory,
which draws from profinite theory to give a topological account of probabilistic automata.
We construct the free prostochastic monoid in Subsection~\ref{subsec:prostochastic:theory}.

The aim of this theory is to give a topological account of the value $1$ problem;
we show in Subsection~\ref{subsec:prostochastic:reformulation}
that the value $1$ problem can be reformulated as an emptiness problem for prostochastic words.

In Subsection~\ref{subsec:prostochastic:limit} we define the notion of polynomial prostochastic words.

The Subsection~\ref{subsec:prostochastic:proof} is devoted to a technical proof, about the powers of stochastic matrices.

The characterisation given in Section~\ref{sec:mma} is stated and proved in this new framework in Subsection~\ref{subsec:prostochastic:reformulation_characterisation}.

\subsection{The Free Prostochastic Monoid}
\label{subsec:prostochastic:theory}
%

The purpose of the prostochastic theory is to construct a (Hausdorff) compact\footnote{Following the French tradition, here by compact we mean Hausdorff compact: distinct points have disjoint neighbourhoods.} monoid $\PA$ together with an injective morphism $\iota : A^* \to \PA$,
called the free prostochastic monoid,
satisfying the following universal property:
\begin{center}
``Every morphism $\phi : A^* \to \matS{\R}$ extends uniquely\\
to a continuous morphism $\widehat{\phi} : \PA \to \matS{\R}$.''
\end{center}
Here, by ``$\widehat{\phi}$ extends $\phi$'' we mean $\phi = \widehat{\phi} \circ \iota$.

We give two statements about $\PA$, the first will be weaker but enough for our purposes in this paper,
and the second more precise, and justifying the name ``free prostochastic monoid''. 
The reason for giving two statements is that the first avoids a number of technical points
that will not play any further role, so the reader interested in the applications to the Markov Monoid algorithm
may skip this second statement.

\begin{theorem}{(Existence of the free prostochastic monoid -- weaker statement)}
\label{thm:free_prostochastic_monoid}
For every finite alphabet $A$,
there exists a compact monoid $\PA$ and an injective morphism $\iota : A^* \to \PA$
such that every morphism $\phi : A^* \to \matS{\R}$ extends uniquely 
to a continuous morphism $\widehat{\phi} : \PA \to \matS{\R}$.
\end{theorem}

We construct $\PA$ and $\iota$.
Consider $X = \prod_{\phi : A^* \to \matS{\R}} \matS{\R}$, the product of several copies of $\matS{\R}$,
one for each morphism $\phi : A^* \to \matS{\R}$.
An element $m$ of $X$ is denoted $(m(\phi))_{\phi : A^* \to \matS{\R}}$: it is given by an element $m(\phi)$ of $\matS{\R}$ 
for each morphism $\phi : A^* \to \matS{\R}$.
Thanks to Tychonoff's theorem, the monoid $X$ equipped with the product topology is compact.

Consider the map $\iota : A \to X$ defined by $\iota(a) = (\phi(a))_{\phi : A \to \P}$, it induces
an injective morphism $\iota : A^* \to X$.
To simplify notation, we sometimes assume that $A \subseteq X$ and denote $a$ for $\iota(a)$.

Denote $\PA = \overline{A^*}$, the closure of $A^*$ in $X$. 
Note that it is a compact monoid: the compactness follows from the fact that it is closed in $X$.
By definition, an element $\overline{u}$ of $\PA$, called a \textit{prostochastic word}, 
is obtained as the limit in $\PA$ of a sequence $\mathbf{u}$ of finite words.
In this case we write $\lim \mathbf{u} = \overline{u}$ and say that $\mathbf{u}$ induces $\overline{u}$.

Note that by definition of the product topology on $X$,
a sequence of finite words $\mathbf{u}$ converges in $X$ if, and only if, 
for all morphisms $\phi : A^* \to \matS{\R}$, the sequence of stochastic matrices $\phi(\mathbf{u})$ converges.

We say that two converging sequences of finite words $\mathbf{u}$ and $\mathbf{v}$ are equivalent
if they induce the same prostochastic word, \textit{i.e.} if $\lim \mathbf{u} = \lim \mathbf{v}$.
Equivalently, two converging sequences of finite words $\mathbf{u}$ and $\mathbf{v}$ are equivalent
if, and only if, 
for all morphisms $\phi : A^* \to \matS{\R}$, we have $\lim \phi(\mathbf{u}) = \lim \phi(\mathbf{v})$.

\begin{proof}
We prove that $\PA$ satisfies the universal property.
Consider a morphism $\phi : A^* \to \matS{\R}$, and define $\widehat{\phi} : \PA \to \matS{\R}$ 
by $\widehat{\phi}(\overline{u}) = \lim \phi(\mathbf{u})$,
where $\mathbf{u}$ is \textit{some} sequence of finite words inducing $\overline{u}$.
This is well defined and extends $\phi$.
Indeed, consider two equivalent sequences of finite words $\mathbf{u}$ and $\mathbf{v}$ inducing $\overline{u}$.
By definition, for all $\psi : A^* \to \matS{\R}$, we have $\lim \psi(\mathbf{u}) = \lim \psi(\mathbf{v})$,
so in particular for $\phi$ this implies $\lim \phi(\mathbf{u}) = \lim \phi(\mathbf{v})$,
and $\widehat{\phi}$ is well defined.
Both continuity and uniqueness are clear.

We prove that $\widehat{\phi}$ is a morphism.
Consider 
$$D = \set{(\overline{u},\overline{v}) \in \PA \times \PA \mid 
\widehat{\phi}(\overline{u} \cdot \overline{v}) = \widehat{\phi}(\overline{u}) \cdot \widehat{\phi}(\overline{v})}.$$
To prove that $\widehat{\phi}$ is a morphism, we prove that $D = \PA \times \PA$.
First of all, $A^* \times A^* \subseteq D$.
Since $A^* \times A^*$ is dense in $\PA \times \PA$, it suffices to show that $D$ is closed.
This follows from the continuity of both product functions in $\PA$ and in $\matS{\R}$ as well as of $\widehat{\phi}$.
\hfill\qed\end{proof}

\bigskip
We give a second, stronger statement about $\PA$,
which in particular justifies the name ``free prostochastic monoid''.

From now on, by ``monoid'' we mean ``compact topological monoids''.
The term topological means that the product function is continuous:
$$\begin{array}{ccc}
\P \times \P & \to & \P \\
(s,t) & \mapsto & s \cdot t
\end{array}$$

A monoid is profinite if any two elements can be distinguished
by a continuous morphism into a finite monoid, \textit{i.e.} by a finite automaton.
(Formally speaking, this is the definition of residually finite monoids, 
which coincide with profinite monoids for compact monoids, see~\cite{Almeida05}.)
To define prostochastic monoids, we use a stronger distinguishing feature, namely probabilistic automata.

\begin{definition}{(Prostochastic monoid)}
A monoid $\P$ is prostochastic if for all elements $s \neq t$ in $\P$,
there exists a continuous morphism $\psi : \P \to \matS{\R}$ such that $\psi(s) \neq \psi(t)$.
\end{definition}

There are many more prostochastic monoids than profinite monoids. 
Indeed, $\matS{\R}$ is prostochastic, but not profinite in general.

The following theorem extends Theorem~\ref{thm:free_prostochastic_monoid}.
The statement is the same as in the profinite theory,
replacing ``profinite monoid'' by ``prostochastic monoid''.

\begin{theorem}{(Existence of the free prostochastic monoid -- stronger statement)}
\label{thm:free_prostochastic_monoid_strong}
For every finite alphabet $A$,
\begin{enumerate}
	\item There exists a prostochastic monoid $\PA$ and an injective morphism $\iota : A^* \to \PA$
	such that every morphism $\phi : A^* \to \P$, 
	where $\P$ is a prostochastic monoid, extends uniquely 
	to a continuous morphism $\widehat{\phi} : \PA \to \P$.

	\item All prostochastic monoids satisfying this universal property are
	homeomorphic.
\end{enumerate}
The unique prostochastic monoid satisfying the universal property stated 
in item 1. is called the free prostochastic monoid, and denoted $\PA$.
\end{theorem}

\begin{proof}
We prove that $\PA$ satisfies the stronger universal property, along the same lines as for the weaker one.
Consider a morphism $\phi : A^* \to \P$, and define $\widehat{\phi} : \PA \to \P$ 
by $\widehat{\phi}(\overline{u}) = \lim \phi(\mathbf{u})$,
where $\mathbf{u}$ is \textit{some} sequence of finite words inducing $\overline{u}$.

To see that this is well defined, we use the fact that $\P$ is prostochastic.
Consider two equivalent sequences of finite words $\mathbf{u}$ and $\mathbf{v}$ inducing $\overline{u}$.
Consider a continuous morphism $\psi : \P \to \matS{\R}$, the composition $\psi \circ \phi$ is a continuous morphism
from $A^*$ to $\matS{\R}$, so since $\mathbf{u}$ and $\mathbf{v}$ are equivalent it follows that 
$\lim (\psi \circ \phi)(\mathbf{u}) = \lim (\psi \circ \phi)(\mathbf{v})$,
\textit{i.e.} $\lim \psi (\phi(\mathbf{u})) = \lim \psi (\phi(\mathbf{v}))$.
Since $\psi$ is continuous, this implies $\psi (\lim \phi(\mathbf{u})) = \psi (\lim \phi(\mathbf{v}))$.
We proved that for all continuous morphisms $\psi : \P \to \matS{\R}$, we have 
$\psi (\lim \phi(\mathbf{u})) = \psi (\lim \phi(\mathbf{v}))$;
since $\P$ is prostochastic, it follows that $\lim \phi(\mathbf{u}) = \lim \phi(\mathbf{v})$,
and $\widehat{\phi}$ is well defined.

Clearly $\widehat{\phi}$ extends $\phi$. Both continuity and uniqueness are clear.
We prove that $\widehat{\phi}$ is a morphism.
Consider 
$$D = \set{(\overline{u},\overline{v}) \in \PA \times \PA \mid 
\widehat{\phi}(\overline{u} \cdot \overline{v}) = \widehat{\phi}(\overline{u}) \cdot \widehat{\phi}(\overline{v})}.$$
To prove that $\widehat{\phi}$ is a morphism, we prove that $D = \PA \times \PA$.
First of all, $A^* \times A^* \subseteq D$.
Since $A^* \times A^*$ is dense in $\PA \times \PA$, it suffices to show that $D$ is closed.
This follows from the continuity of both product functions in $\PA$ and in $\P$ as well as of $\widehat{\phi}$.

\bigskip
We prove that $\PA$ is prostochastic.
Let $\overline{u} \neq \overline{v}$ in $\PA$.
Consider two sequences of finite words $\mathbf{u}$ and $\mathbf{v}$ inducing respectively $\overline{u}$ and $\overline{v}$,
there exists a morphism $\phi : A^* \to \matS{\R}$ such that $\lim \phi(\mathbf{u}) \neq \lim \phi(\mathbf{v})$.
Thanks to the universal property proved in the first point,
this induces a continuous morphism $\widehat{\phi} : \PA \to \matS{\R}$
such that $\widehat{\phi}(\mathbf{u}) \neq \widehat{\phi}(\mathbf{v})$, finishing the proof that $\PA$ is prostochastic.

\bigskip
We now prove that there is a unique prostochastic monoid satisfying the universal property, up to homeomorphism.
Let $\P_1$ and $\P_2$ be two prostochastic monoids satisfying the universal property, together with two injective morphisms 
$\iota_1 : A^* \to \P_1$ and $\iota_2 : A^* \to \P_2$.
Thanks to the universal property, $\iota_1$ and $\iota_2$ are extended to continuous morphisms 
$\widehat{\iota_1} : \P_2 \to \P_1$ and $\widehat{\iota_2} : \P_1 \to \P_2$,
and $\widehat{\iota_1} \circ \iota_2 = \iota_1$ and $\widehat{\iota_2} \circ \iota_1 = \iota_2$.
This implies that $\widehat{\iota_1} \circ \widehat{\iota_2} \circ \iota_1 = \iota_1$;
thanks to the universal property again, there exists a unique continuous morphism $\theta$ such that $\theta \circ \iota_1 = \iota_1$,
and since both $\widehat{\iota_1} \circ \widehat{\iota_2}$ and the identity morphism on $\P_1$ satisfy this equality,
it follows that they are equal. 
Similarly, $\widehat{\iota_2} \circ \widehat{\iota_1}$ is equal to the identity morphism on $\P_2$.
It follows that $\widehat{\iota_1}$ and $\widehat{\iota_2}$ are mutually inverse homeomorphisms between $\P_1$ and $\P_2$.
\hfill\qed\end{proof}

\begin{remark}
We remark that the free prostochastic monoid $\PA$ contains the free profinite monoid $\widehat{A^*}$.
To see this, we start by recalling some properties of $\widehat{A^*}$,
which is the set of \textit{converging} sequences up to \textit{equivalence}, where:
\begin{itemize}
	\item a sequence of finite words $\mathbf{u}$ is converging if, and only if, for every deterministic automaton $\A$,
	the sequence is either ultimately accepted by $\A$ or ultimately rejected by $\A$,
	\textit{i.e.}
	there exists $N \in \N$ such that either for all $n \ge N$, the word $u_n$ is accepted by $\A$,
	or for all $n \ge N$, the word $u_n$ is rejected by $\A$,
	\item two sequences of finite words $\mathbf{u}$ and $\mathbf{v}$ are equivalent if for every deterministic automaton $\A$,
	either both sequences are ultimately accepted by $\A$, or both sequences are ultimately rejected by $\A$.
\end{itemize}
Clearly:
\begin{itemize}
	\item if a sequence of finite words is converging with respect to $\PA$, then it is converging with respect to $\widehat{A^*}$,
	as deterministic automata form a subclass of probabilistic automata,
	\item if two sequences of finite words are equivalent with respect to $\PA$, then they are equivalent with respect to $\widehat{A^*}$.
\end{itemize}
Each profinite word induces at least one prostochastic word: 
by compactness of $\PA$, each sequence of finite words $\mathbf{u}$
contains a converging subsequence with respect to $\PA$.
This defines an injection from $\widehat{A^*}$ into $\PA$.
In particular, this implies that $\PA$ is uncountable. 
Since it is defined as the topological of a countable set, it has the cardinality of the continuum.
\end{remark}

\subsection{Reformulation of the Value 1 Problem}
\label{subsec:prostochastic:reformulation}
The aim of this subsection is to show that the value $1$ problem, which talks about sequences of finite words,
can be reformualted as an emptiness problem over prostochastic words.

\begin{definition}{(Prostochastic language of a probabilistic automaton)}
Let $\A$ be a probabilistic automaton, $\phi$ is the transition function of $\A$.
The prostochastic language of $\A$ is:
$$L(\A) = \set{\overline{u} \mid \widehat{\phi}(\overline{u})(q_0,F) = 1}.$$ 
We say that $\A$ accepts a prostochastic word $\overline{u}$ if $\overline{u} \in L(\A)$.
\end{definition}

\begin{theorem}{(Reformulation of the value 1 problem)}
\label{thm:equivalence_prostochastic}
Let $\A$ be a probabilistic automaton.
The following are equivalent:
\begin{itemize}
	\item $\val{\A} = 1$,
	\item $L(\A)$ is non-empty.
\end{itemize}
\end{theorem}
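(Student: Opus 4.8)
The plan is to prove the two implications separately, using the reformulation of $\val{\A}$ as a supremum together with the compactness of $\matS{\R}$.

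For the direction $\val{\A} = 1 \Rightarrow L(\A) \neq \emptyset$: if $\val{\A} = 1$, then by definition of the supremum there is a sequence of finite words $\mathbf{w} = (w_n)\nN$ with $\prob{\A}(w_n) \ge 1 - \frac{1}{n}$, so that $\lim \prob{\A}(\mathbf{w}) = 1$. This sequence need not be converging in the sense of the prostochastic monoid, but I would extract a converging subsequence: the point is that $\matS{\R}$ is compact, so for any single morphism $\phi : A \to \matS{\R}$ the sequence $(\phi(w_n))\nN$ has a convergent subsequence. To handle \emph{all} morphisms simultaneously, I would invoke compactness of $X = \prod_{\phi : A \to M} M$ (Tychonoff), which is exactly how $\PA$ was built: the sequence $(\iota(w_n))\nN$ lives in the compact space $X$, hence has a convergent subnet, or — since we only need a subsequence and the relevant information is captured by countably many coordinates up to the metrizable pieces — a convergent subsequence $\mathbf{u} = (w_{n_k})_k$. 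Then $\mathbf{u}$ is a converging sequence inducing some $\overline{u} \in \PA$, and since $(\prob{\A}(w_{n_k}))_k$ is a subsequence of a sequence converging to $1$, we still have $\lim \prob{\A}(\mathbf{u}) = 1$. Hence $\overline{u} \in L(\A)$.

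For the converse $L(\A) \neq \emptyset \Rightarrow \val{\A} = 1$: take any $\overline{u} \in L(\A)$, so there is a converging sequence $\mathbf{u} = (u_n)\nN$ inducing $\overline{u}$ with $\lim \prob{\A}(\mathbf{u}) = 1$. Then for every $\varepsilon > 0$ there is $n$ with $\prob{\A}(u_n) \ge 1 - \varepsilon$, so $\val{\A} = \sup_{w} \prob{\A}(w) \ge 1 - \varepsilon$ for all $\varepsilon$, giving $\val{\A} = 1$. This direction is routine and essentially unwinds the definitions.

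The main obstacle is the subsequence extraction in the first direction: one must be slightly careful that ``converging'' for a sequence of finite words means convergence of $\iota(\mathbf{u})$ in $X$, i.e., convergence of $\phi(\mathbf{u})$ in $\matS{\R}$ for \emph{every} morphism $\phi$ simultaneously. Since $X$ is compact but not metrizable in general, a naive diagonal argument over all morphisms does not directly yield a \emph{sequential} limit. The clean fix is to observe that the value $1$ condition and the prostochastic-word structure only ever refer to the single morphism $\phi_\A$ induced by $\A$ plus, for membership in $\PA = \overline{\iota(A^*)}$, the closure; since $\iota(A^*)$ is a countable set and $X$ restricted to its closure behaves well, one can pass to a subsequence converging coordinatewise along a countable cofinal family, or simply work with the metric $d$ from Subsection~\ref{subsec:discussions} under which $A^*$ embeds and whose completion is $\PA$: in that metric space, the bounded sequence $(u_n)$ with $\prob{\A}(u_n) \to 1$ has a Cauchy subsequence by a standard argument, and its limit is the desired $\overline{u}$. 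Either route closes the gap; I would present the compactness-of-$X$ argument and remark that passing to a subsequence suffices because only countably many coordinates matter.
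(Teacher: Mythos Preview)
Your proof is correct and follows the same route as the paper: for the forward direction both of you take a sequence with $\prob{\A}(u_n)\to 1$ and extract a converging subsequence to obtain an element of $L(\A)$, and the converse is the same unwinding of definitions. The only difference is that the paper dispatches the extraction in one line, ``by compactness of $\PA$ it contains a converging subsequence,'' without discussing the sequential-compactness subtlety you flag; your appeal to the (quasi-)metric description of $\PA$ from Subsection~\ref{subsec:discussions} is exactly the justification that line tacitly relies on.
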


\begin{proof}
Assume $\val{\A} = 1$, then there exists a sequence of words $\mathbf{u}$ such that $\lim \prob{\A}(\mathbf{u}) = 1$.
We see $\mathbf{u}$ as a sequence of prostochastic words.
By compactness of $\PA$ it contains a converging subsequence.
The prostochastic word induced by this subsequence belongs to $L(\A)$.

Conversely, let $\overline{u}$ in $L(\A)$, \textit{i.e.} such that $\widehat{\phi}(\overline{u})(q_0,F) = 1$.
Consider a sequence of finite words $\mathbf{u}$ inducing $\overline{u}$.
By definition, we have $\lim \phi(\mathbf{u})(q_0,F) = 1$,
\textit{i.e.} $\lim \prob{\A}(\mathbf{u}) = 1$, implying that $\val{\A} = 1$.
\hfill\qed\end{proof}

\subsection{The Limit Operator, Fast and Polynomial Prostochastic Words}
\label{subsec:prostochastic:limit}
We show in this subsection how to construct non-trivial prostochastic words, and in particular the polynomial prostochastic words.
To this end, we need to better understand \textit{convergence speeds phenomena}:
different limit behaviours can occur, depending on how fast the underlying Markov chains converge.

\bigskip
We define a limit operator $\omega$.
Consider the function $f : \N \to \N$ defined by $f(n) = k!$, where $k$ is maximal such that $k! \le n$.
The function $f$ grows linearly; the choice of $n$ is arbitrary, one could replace $n$ by any polynomial, or even by any subexponential function,
see Remark~\ref{rem:polynomial_prostochastic_words}.

The operator $\omega$ takes as input a sequence of finite words, and outputs a sequence of finite words.
Formally, let $\mathbf{u}$ be a sequence of finite words, define:
$$\mathbf{u}^\omega = (u_n^{f(n)})\nN.$$

It is not true in general that if $\mathbf{u}$ converges, then $\mathbf{u}^\omega$ converges.
We will show that a sufficient condition is that $\mathbf{u}$ is fast.

We say that a sequence $(M_n)\nN$ converges exponentially fast to $M$ if there exists a constant $C > 1$ 
such that for all $n$ large enough, $\norm{M_n - M} \le C^{-n}$.

\begin{definition}{(Fast sequence)}
A sequence of finite words $\mathbf{u}$ is fast if it converges
(we denote $\overline{u}$ the prostochastic word it induces),
and for every morphism $\phi : A^* \to \matS{\R}$, 
the sequence $(\phi(u_n))\nN$ converges exponentially fast.
\end{definition}

A prostochastic word is \textit{fast} if it is induced by \textit{some} fast sequence.
We denote $\PAF$ the set of fast prostochastic words.
Note that a priori, not all prostochastic words are induced by some fast sequence.

\bigskip
We first prove that $\PAF$ is a submonoid of $\PA$.

\begin{lemma}{(The concatenation of two fast sequences is fast)}
\label{lem:concatenation_fast}
Let $\mathbf{u},\mathbf{v}$ be two fast sequences.

The sequence $\mathbf{u \cdot v} = (u_n \cdot v_n)\nN$ is fast.
\end{lemma}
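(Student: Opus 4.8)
The plan is to verify the fast condition for the product sequence $\mathbf{u \cdot v}$ by fixing an arbitrary morphism $\phi : A \to \matS{\R}$ and directly estimating $\norm{\phi(u_n \cdot v_n) - \widehat{\phi}(\overline{u \cdot v})}$. First I would record the obvious facts that $\mathbf{u \cdot v}$ converges (since $\iota$ is a morphism and $\PA$ is topological by Lemma~\ref{lem:prostochastic_implies_compact}, the limit is $\overline{u} \cdot \overline{v}$, so in particular $\widehat{\phi}(\overline{u \cdot v}) = \widehat{\phi}(\overline{u}) \cdot \widehat{\phi}(\overline{v})$), and that $\phi(u_n \cdot v_n) = \phi(u_n) \cdot \phi(v_n)$ since $\phi$ is a morphism. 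Writing $A_n = \phi(u_n)$, $B_n = \phi(v_n)$, $A = \widehat{\phi}(\overline{u})$, $B = \widehat{\phi}(\overline{v})$, the goal reduces to bounding $\norm{A_n B_n - AB}$.

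The key step is the standard telescoping decomposition $A_n B_n - AB = (A_n - A)B_n + A(B_n - B)$, together with submultiplicativity of the norm and the fact that stochastic matrices have norm $1$ (the Fact at the start of Section~\ref{sec:defs}). This gives
$$\norm{A_n B_n - AB} \le \norm{A_n - A} \cdot \norm{B_n} + \norm{A} \cdot \norm{B_n - B} = \norm{A_n - A} + \norm{B_n - B}\ .$$
By fastness of $\mathbf{u}$ and $\mathbf{v}$ there are polynomials $P_u, P_v$ and constants $C_u, C_v > 1$ with $\norm{A_n - A} \le P_u(|u_n|) C_u^{-|u_n|}$ and $\norm{B_n - B} \le P_v(|v_n|) C_v^{-|v_n|}$. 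Setting $C = \min(C_u, C_v) > 1$ and using $|u_n|, |v_n| \le |u_n \cdot v_n| =: \ell_n$, I would bound each term by $P_u(\ell_n) C^{-|u_n|}$ and $P_v(\ell_n) C^{-|v_n|}$; the slight annoyance is that the exponents involve $|u_n|$ and $|v_n|$ separately rather than $\ell_n$, but since one of $|u_n|, |v_n|$ is at least $\ell_n / 2$ one can crudely absorb the other: e.g. $C^{-|u_n|} \le C^{-|u_n|}$ and note $\sum$ of two terms is at most $2 \max$, and $\max(C^{-|u_n|}, C^{-|v_n|}) = C^{-\min(|u_n|,|v_n|)} \le (C^{1/2})^{-\ell_n \cdot \min(|u_n|,|v_n|)/(\ell_n/2) \cdot (1/2)}$ — more simply, since at least one of the exponents exceeds $\ell_n/2$, and both polynomial prefactors are bounded by $P(\ell_n) := P_u(\ell_n) + P_v(\ell_n)$, one gets $\norm{A_n B_n - AB} \le 2\, P(\ell_n) \cdot D^{-\ell_n}$ for $D = C^{1/2} > 1$, after checking the remaining term $C^{-|u_n|}$ or $C^{-|v_n|}$ with the \emph{small} exponent is at most $1$. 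Finally $2P(\ell_n)$ is again a polynomial in $\ell_n = |u_n \cdot v_n|$, which is exactly the required form, so $\mathbf{u \cdot v}$ is fast.

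I do not expect a serious obstacle here; the only mild subtlety is the bookkeeping of the exponents $|u_n|$ versus $|u_n| + |v_n|$, handled by the $2\max$ trick above. One should also note that if some $v_n$ (or $u_n$) is empty infinitely often the bound is trivially fine, and that the argument is uniform: the polynomial and constant produced for $\mathbf{u \cdot v}$ depend only on those of $\mathbf{u}$ and $\mathbf{v}$ for the given $\phi$, which is all the definition demands. Since fastness of a prostochastic word only requires \emph{some} fast inducing sequence, Lemma~\ref{lem:concatenation_fast} immediately yields that $\PAF$ is closed under concatenation, i.e. is a submonoid of $\PA$.
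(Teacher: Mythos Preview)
Your core approach is exactly the paper's: the same telescoping (the paper writes $A_n B_n-AB=A_n(B_n-B)-(A-A_n)B$, you the symmetric variant), submultiplicativity, and $\norm{M}=1$ for stochastic $M$, arriving at
\[
\norm{\phi(u_n v_n)-\widehat\phi(\overline u)\widehat\phi(\overline v)}
\ \le\ \norm{\phi(u_n)-\widehat\phi(\overline u)}+\norm{\phi(v_n)-\widehat\phi(\overline v)}\ .
\]
The paper's proof literally stops at this inequality and declares ``the previous inequality implies that $\mathbf{u\cdot v}$ is fast''; so up to that point you match it.

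Where you go further than the paper, your bookkeeping does not work as written. From $P_u(|u_n|)\,C^{-|u_n|}+P_v(|v_n|)\,C^{-|v_n|}$ the ``$2\max$'' trick gives a factor $C^{-\min(|u_n|,|v_n|)}$, and $\min(|u_n|,|v_n|)$ need not be comparable to $\ell_n=|u_n|+|v_n|$ (take $u_n$ a fixed letter and $|v_n|\to\infty$: then $\min=1$ while $\ell_n\to\infty$). Your fallback, that the term with the \emph{small} exponent ``is at most $1$'', only yields a bound $P(\ell_n)\cdot 1$, which is polynomial in $\ell_n$, not of the form $Q(\ell_n)\,D^{-\ell_n}$; so the displayed conclusion $2P(\ell_n)D^{-\ell_n}$ with $D=C^{1/2}$ is not established. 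This is a genuine gap in your write-up, though note that the paper is entirely silent on this point, so you are not behind it---you have simply made explicit a difficulty the paper sweeps under the rug. If you want a clean fix, handle separately the case where one factor has bounded length (a convergent sequence taking values in the finite set $A^{\le M}$ is eventually constant, so the corresponding error term is eventually~$0$), and in the remaining case both $|u_n|,|v_n|\to\infty$, whence each summand can be bounded by a function of $\ell_n$ that is eventually dominated by $Q(\ell_n)D^{-\ell_n}$.
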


\begin{proof}
Consider a morphism $\phi : A^* \to \matS{\R}$ and $n \in \N$.
\begin{eqnarray*}
\lefteqn{\norm{\phi(u_n \cdot v_n) - \widehat{\phi}(\overline{u} \cdot \overline{v})}} \\
& = & \norm{\phi(u_n) \cdot \phi(v_n) - \widehat{\phi}(\overline{u}) \cdot \widehat{\phi}(\overline{v})} \\
& = & \norm{\phi(u_n) \cdot (\phi(v_n) - \widehat{\phi}(\overline{v})) - (\widehat{\phi}(\overline{u}) - \phi(u_n)) \cdot \widehat{\phi}(\overline{v})}\\ 
& \le & \norm{\phi(u_n)} \cdot \norm{\phi(v_n) - \widehat{\phi}(\overline{v})} + \norm{\widehat{\phi}(\overline{u}) - \phi(u_n)} \cdot \norm{\widehat{\phi}(\overline{v})}\\ 
& = & \norm{\phi(v_n) - \widehat{\phi}(\overline{v})} + \norm{\widehat{\phi}(\overline{u}) - \phi(u_n)}.
\end{eqnarray*}
Since $\mathbf{u}$ and $\mathbf{v}$ are fast, the previous inequality implies that $\mathbf{u \cdot v}$ is fast.
\hfill\qed\end{proof}

Let $\overline{u}$ and $\overline{v}$ be two fast prostochastic words, thanks to Lemma~\ref{lem:concatenation_fast},
the prostochastic word $\overline{u} \cdot \overline{v}$ is fast.

\bigskip
The remainder of this subsection is devoted to proving that $\omega$ is an operator $\PAF \to \PAF$.
This is the key technical point of our characterisation. 
Indeed, we will define polynomial prostochastic words using concatenation and the operator $\omega$,
mimicking the definition of polynomial sequences of finite words.
The fact that $\omega$ preserves the fast property of prostochastic words allows 
to obtain a perfect correspondence between polynomial sequences of finite words and polynomial prostochastic words.

The main technical tool is the following theorem, stating the exponentially fast convergence of the powers of a stochastic matrix.

\begin{theorem}{(Powers of a stochastic matrix)}
\label{thm:powers_stochastic_matrix}
Let $M \in \matS{\R}$. Denote $P = M^{|Q|!}$.
Then the sequence $(P^n)_{n \in \N}$ converges exponentially fast to a matrix $M^\omega$, satisfying:
$$\boo{M^\omega}(s,t) = 
\begin{cases}
1 & \textrm{if } \boo{P}(s,t) = 1 \textrm{ and } t \textrm{ is } \boo{P}\textrm{-recurrent,} \\
0 & \textrm{otherwise.}
\end{cases}$$
\end{theorem}

The proof of Theorem~\ref{thm:powers_stochastic_matrix} is given in Subsection~\ref{subsec:prostochastic:proof}.

The following lemma shows that the $\omega$ operator is well defined for fast sequences.
The second item shows that $\omega$ commutes with morphisms.

\begin{lemma}{(Limit operator for fast sequences)}
\label{lem:limit_fast}
Let $\mathbf{u}, \mathbf{v}$ be two equivalent fast sequences, inducing the fast prostochastic word $\overline{u}$.
Then the sequences $\mathbf{u}^\omega$ and $\mathbf{v}^\omega$ are fast and equivalent,
inducing the fast prostochastic word denoted $\overline{u}^\omega$.

Furthermore, for every morphism $\phi : A^* \to \matS{\R}$, we have $\widehat{\phi}(\overline{u}^\omega) = \widehat{\phi}(\overline{u})^\omega$.
\end{lemma}

\begin{proof}
Let $\phi : A \to \matS{\R}$.

The sequence $(\widehat{\phi}(\overline{u})^{f(n)})\nN$ is a subsequence of $(\widehat{\phi}(\overline{u})^{|Q|! \cdot n})\nN$,
so Theorem~\ref{thm:powers_stochastic_matrix} implies that it converges exponentially fast to $\widehat{\phi}(\overline{u})^\omega$.
It follows that there exists a constant $C_1 > 1$ such that for all $n$ large enough,
we have $\norm{\widehat{\phi}(\overline{u})^{f(n)} - \widehat{\phi}(\overline{u})^\omega} \le C_1^{-f(n)}$.

We proceed in two steps, using the following inequality, which holds for every $n$:
$$\norm{\phi(u_n^{f(n)}) - \widehat{\phi}(\overline{u})^\omega}
\le 
\norm{\phi(u_n)^{f(n)} - \widehat{\phi}(\overline{u})^{f(n)}} 
+ 
\norm{\widehat{\phi}(\overline{u})^{f(n)} - \widehat{\phi}(\overline{u})^\omega}.$$

For the left summand, we rely on the following equality, where $x$ and $y$ may not commute:
$$x^N - y^N = \sum_{k = 0}^{N-1} x^{N-k-1} \cdot (x - y) \cdot y^k.$$
Leting $N = f(n)$, this gives:
\begin{eqnarray*}
\lefteqn{\norm{\phi(u_n)^N - \widehat{\phi}(\overline{u})^N}} \\ 
&& =
\norm{\sum_{k = 0}^{N-1} \phi(u_n)^{N-k-1} \cdot (\phi(u_n) - \widehat{\phi}(\overline{u})) \cdot \widehat{\phi}(\overline{u})^k} \\ 
&& \le 
\sum_{k = 0}^{N-1} 
\norm{\phi(u_n)^{N-k-1}} 
\cdot \norm{\phi(u_n) - \widehat{\phi}(\overline{u})} 
\cdot \norm{\widehat{\phi}(\overline{u})^k}  \\
&& \le  
\sum_{k = 0}^{N-1} 
\underbrace{\norm{\phi(u_n)}^{N-k-1}}_{= 1}
\cdot \norm{\phi(u_n) - \widehat{\phi}(\overline{u})} 
\cdot \underbrace{\norm{\widehat{\phi}(\overline{u})}^k}_{= 1}  \\
&& =  
N \cdot \norm{\phi(u_n) - \widehat{\phi}(\overline{u})}.
\end{eqnarray*}
Since $\mathbf{u}$ is fast, there exists a constant $C_2 > 1$ such that $\norm{\phi(u_n) - \widehat{\phi}(\overline{u})} \le C_2^{-n}$.
Altogether, we have 
$$\norm{\phi(u_n^{f(n)}) - \widehat{\phi}(\overline{u})^\omega} \le f(n) \cdot C_2^{-n} + C_1^{-f(n)}.$$
To conclude, observe that for all $n$ large enough, we have $\frac{n}{\log(n)} \le f(n) \le n$.
It follows that the sequence $\mathbf{u}^\omega$ is fast, and that
$\phi(\mathbf{u}^\omega)$ converges to $\widehat{\phi}(\overline{u})^\omega$.

\medskip
Furthermore, since $\mathbf{u}$ and $\mathbf{v}$ are equivalent,
we have $\lim \phi(\mathbf{u}) = \lim \phi(\mathbf{v})$,
\textit{i.e.} $\widehat{\phi}(\overline{u}) = \widehat{\phi}(\overline{v})$,
so $\widehat{\phi}(\overline{u})^\omega = \widehat{\phi}(\overline{v})^\omega$,
\textit{i.e.} $\lim \phi(\mathbf{u}^\omega) = \lim \phi(\mathbf{v}^\omega)$,
This implies that $\mathbf{u}^\omega$ and $\mathbf{v}^\omega$ are equivalent.
\hfill\qed\end{proof}

Let $\overline{u}$ be a fast prostochastic word, we define the prostochastic word $\overline{u}^\omega$
as induced by $\mathbf{u}^\omega$, for some sequence $\mathbf{u}$ inducing $\overline{u}$.
Thanks to Lemma~\ref{lem:limit_fast}, the prostochastic word $\overline{u}^\omega$ is well defined,
and fast.

We can now define polynomial prostochastic words.

First, $\omega$-expressions are described by the following grammar:
$$E \qquad \longrightarrow \qquad a \quad \mid \quad E \cdot E \quad \mid \quad E^\omega.$$

We define an interpretation $\inter{\ \cdot\ }$ of $\omega$-expressions into fast prostochastic words:
\begin{itemize}
	\item $\inter{a}$ is prostochastic word induced by the constant sequence of the one letter word $a$, 
	\item $\inter{E_1 \cdot E_2} = \inter{E_1} \cdot \inter{E_2}$,
	\item $\inter{E^\omega} = \inter{E}^\omega$.
\end{itemize}

The following definition of polynomial prostochastic words is in one-to-one correspondence with the definition
of polynomial sequences of finite words.

\begin{definition}{(Polynomial prostochastic word)}
The set of \textit{polynomial prostochastic words} is $\set{\inter{E} \mid E \textrm{ is an } \omega\textrm{-expression}}$.
\end{definition}

\begin{remark}
\label{rem:polynomial_prostochastic_words}
Why the term polynomial?

Consider an $\omega$-expression $E$, say $(a^\omega b)^\omega$, 
and the prostochastic word $\inter{(a^\omega b)^\omega}$,
which is induced by the sequence of finite words $((a^{f(n)} b)^{f(n)})\nN$.
The function $f$ grows linearly, so this sequence represents a polynomial behaviour.
Furthermore, the proofs above yield the following robustness property:
all converging sequences of finite words $((a^{g(n)} b)^{h(n)})\nN$, where $g,h : \N \to \N$
are subexponential functions, are equivalent, so they induce the same polynomial prostochastic word $\inter{(a^\omega b)^\omega}$.
We say that a function $g : \N \to \N$ is subexponential if for all constants $C > 1$ we have $\lim_n g(n) \cdot C^{-n} = 0$;
all polynomial functions are subexponential.

This justifies the terminology; we say that the polynomial prostochastic words represent all polynomial behaviours.
\end{remark}

\subsection{Powers of a Stochastic Matrix}
\label{subsec:prostochastic:proof}
In this subsection, we prove Theorem~\ref{thm:powers_stochastic_matrix}.

\begin{figure}[ht]
\begin{center}
\includegraphics[scale=.78]{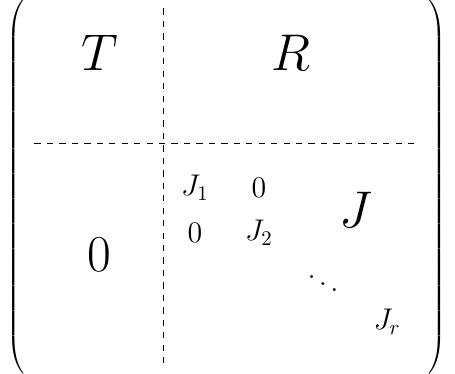}
\includegraphics[scale=.78]{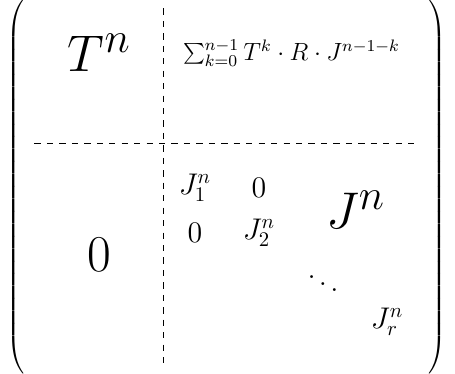}
\caption{\label{fig:matrix_decomposition} Decomposition of $P$ (on the left) and of $P^n$ (on the right).}
\end{center}
\end{figure}

Let $M \in \matS{\R}$, consider $P = M^{|Q|!}$. It is easy to see that $\boo{P}$ is idempotent.
We decompose $P$ as illustrated in Figure~\ref{fig:matrix_decomposition}, by indexing states in the following way:
\begin{itemize}
	\item first, $\boo{P}$-transient states,
	\item then, $\boo{P}$-recurrent states, grouped by recurrence class.
\end{itemize}

In this decomposition, we have the following properties:
\begin{itemize}
	\item for all $m$-transient states $s \in Q$, we have $\sum_{t\ m\textrm{-transient}} T(s,t) < 1$,
	so $\norm{T} < 1$,
	\item the matrices $J_i$ are irreducible: for all states $s,t \in Q$ corresponding to the same $J_i$,
	we have $J_i(s,t) > 0$.
\end{itemize}

The power $P^n$ of $P$ is represented in Figure~\ref{fig:matrix_decomposition}.

This decomposition allows to treat separately the three blocks:
\begin{enumerate}
	\item the block $T^n$: thanks to the observation above $\norm{T} < 1$, which combined with $\norm{T^n} \le \norm{T}^n$
	implies that $(T^n)\nN$ converges to $0$ exponentially fast,
	\item the block $\sum_{k = 0}^{n-1} T^k \cdot R \cdot J^{n-1-k}$,
	\item the block $J^n$: it is handled by Lemma~\ref{lem:power_irreducible_stochastic_matrix}.
\end{enumerate}

We first focus on item 3., and show that the sequence $(J^n)\nN$ converges exponentially fast.
Each block $J_i$ is handled separately by the following lemma.

\begin{lemma}{(Powers of an irreducible stochastic matrix)}
\label{lem:power_irreducible_stochastic_matrix}
Let $J \in \matS{\R}$ be irreducible: for all states $s,t \in Q$, we have $J(s,t) > 0$.
Then the sequence $(J^n)_{n \in \N}$ converges exponentially fast to a matrix $J^\infty$.

Furthermore, $J^\infty$ is irreducible.
\end{lemma}

This lemma is a classical result from Markov chain theory,
sometimes called the Convergence Theorem; see for instance~\cite{LevinPeresWilmer08}.

We now consider item 2., and show that the sequence $(\sum_{k = 0}^{n-1} T^k \cdot R \cdot J^{n-1-k})\nN$ converges exponentially fast.
Observe that since $\norm{T} < 1$, the matrix $I - T$ is invertible, where $I$ is the identity matrix with the same dimension as $T$.
Denote $N = (I - T)^{-1}$, it is equal to $\sum_{k \ge 0} T^k$.
Denote $J^\infty = \lim_n J^n$, which exists thanks to Lemma~\ref{lem:power_irreducible_stochastic_matrix}.

We have:
\begin{eqnarray*}
\lefteqn{\norm{\sum_{k = 0}^{n-1} T^k \cdot R \cdot J^{n-1-k} - N \cdot R \cdot J^\infty}} \\
&& =
\norm{\sum_{k = 0}^{n-1} \left[T^k \cdot R \cdot \left(J^{n-1-k} - J^\infty\right) + T^k \cdot R \cdot J^\infty\right] - N \cdot R \cdot J^\infty} \\ 
&& =
\norm{\sum_{k = 0}^{n-1} T^k \cdot R \cdot \left(J^{n-1-k} - J^\infty\right) + \left(\sum_{k = 0}^{n-1} T^k - N\right) \cdot R \cdot J^\infty} \\ 
&& \le  
\norm{\sum_{k = 0}^{n-1} T^k \cdot R \cdot \left(J^{n-1-k} - J^\infty\right)} + \norm{\left(\sum_{k = 0}^{n-1} T^k - N\right) \cdot R \cdot J^\infty}.
\end{eqnarray*}

We first consider the right summand:
\begin{eqnarray*}
\lefteqn{\norm{\left(\sum_{k = 0}^{n-1} T^k - N\right) \cdot R \cdot J^\infty}} \\
&& =  
\norm{\left(\sum_{k \ge n} T^k\right) \cdot R \cdot J^\infty} \\
&& \le
\norm{\sum_{k \ge n} T^k} \cdot \underbrace{\norm{R}}_{\le 1} \cdot \underbrace{\norm{J^\infty}}_{= 1} \\
&& =
\norm{T^n \cdot N} \\
&& \le
\norm{N} \cdot \norm{T}^n.
\end{eqnarray*}
The first equality follows from the fact that $\sum_{k = 0}^{n-1} T^k - N = \sum_{k \ge n} T^k$.
Thus, this term converges exponentially fast to $0$.

We next consider the left summand.
Thanks to Lemma~\ref{lem:power_irreducible_stochastic_matrix},
there exists a constant $C > 1$ such that for all $p \in \N$, we have $\norm{J^p - J^\infty} \le C^{-p}$.

\begin{eqnarray*}
\lefteqn{\norm{\sum_{k = 0}^{n-1} T^k \cdot R \cdot \left(J^{n-1-k} - J^\infty\right)}} \\
&& \le
\sum_{k = 0}^{n-1} \norm{T}^k \cdot \underbrace{\norm{R}}_{\le 1} \cdot \norm{J^{n-1-k} - J^\infty} \\
&& \le
\sum_{k = 0}^{n-1} \norm{T}^k \cdot \norm{J^{n-1-k} - J^\infty} \\
&& =
\sum_{k = 0}^{\lfloor n/2 \rfloor} \underbrace{\norm{T}^k}_{\le 1} \cdot \norm{J^{n-1-k} - J^\infty} + 
\sum_{k = \lfloor n/2 \rfloor + 1}^{n-1} \norm{T}^k \cdot \underbrace{\norm{J^{n-1-k} - J^\infty}}_{\le 2} \\
&& \le
\frac{C^{-(\lfloor n/2 \rfloor + 1)} - C^{-n}}{1 - C} +
2 \cdot \frac{\norm{T}^{\lfloor n/2 \rfloor + 1} - \norm{T}^n}{1 - \norm{T}} \\
&& \le
2 \cdot \left(\frac{C^{-(\lfloor n/2 \rfloor + 1)}}{1 - C} + \frac{\norm{T}^{\lfloor n/2 \rfloor + 1}}{1 - \norm{T}}\right).
\end{eqnarray*}
Thus, this term converges exponentially fast to $0$.

We proved that $(P^n)\nN$ converges exponentially fast to a matrix $M^\omega$.
We conclude the proof of Theorem~\ref{thm:powers_stochastic_matrix} by observing that:
$$\boo{M^\omega}(s,t) = 
\begin{cases}
1 & \textrm{if } \boo{P}(s,t) = 1 \textrm{ and } t \textrm{ is } \boo{P}\textrm{-recurrent,} \\
0 & \textrm{otherwise.}
\end{cases}$$


Assume first that $\boo{M^\omega}(s,t) = 1$, \textit{i.e.} $M^\omega(s,t) > 0$. 
This already implies that $t$ is $\boo{P}$-recurrent, looking at the decomposition of $P^n$.
Since $M^\omega = \lim_n P^n$, it follows that for $n$ large enough, we have $P^n(s,t) > 0$.
The matrix $\boo{P}$ is idempotent, so we have for all $n \in \N$ the equality $\boo{P^n} = \boo{P}$,
implying that $P(s,t) > 0$, \textit{i.e.} $\boo{P}(s,t) = 1$.

Conversely, assume that $\boo{P}(s,t) = 1$ and $t$ is $\boo{P}$-recurrent.
Observe that for all $n \in \N$ we have $P^{n+1}(s,t) \ge P(s,t) \cdot P^n(t,t)$.
For $n$ tending to infinity, this implies $M^\omega(s,t) \ge P(s,t) \cdot M^\omega(t,t)$.
Note that $P(s,t) > 0$, and $M^\omega(t,t) > 0$ since $t$ is $\boo{P}$-recurrent and thanks to Lemma~\ref{lem:power_irreducible_stochastic_matrix}.
It follows that $M^\omega(s,t) > 0$, \textit{i.e.} $\boo{M^\omega}(s,t) = 1$.

\subsection{Reformulating the Characterisation}
\label{subsec:prostochastic:reformulation_characterisation}
For proof purposes, we give an equivalent presentation of the Markov Monoid through $\omega$-expressions.
Given a probabilistic automaton $\A$, we define an interpretation $\tr{\cdot}$ of $\omega$-expressions into Boolean matrices:
\begin{itemize}
	\item $\tr{a}$ is $\boo{\phi(a)}$, 
	\item $\tr{E_1 \cdot E_2}$ is $\tr{E_1} \cdot \tr{E_2}$,
	\item $\tr{E^\omega}$ is $\tr{E}^\sharp$, 
	only defined if $\tr{E}$ is idempotent.
\end{itemize}
Then the Markov Monoid of $\A$ is $\set{\tr{E} \mid E \textrm{ an } \omega\textrm{-expression}}$.

\bigskip
The following theorem is a reformulation of Theorem~\ref{thm:characterisation_mma},
using the prostochastic theory.
It clearly implies Theorem~\ref{thm:characterisation_mma}: indeed, a polynomial prostochastic word induces a polynomial sequence,
and vice-versa.

\begin{theorem}{(Characterisation of the Markov Monoid algorithm)}
\label{thm:characterisation_mma_prostochastic}
The Markov Monoid algorithm answers ``YES'' on input $\A$
if, and only if,
there exists a polynomial prostochastic word accepted by $\A$.
\end{theorem}

The proof relies on the notion of reification,
used in the following proposition, from which follows Theorem~\ref{thm:characterisation_mma_prostochastic}.

\begin{definition}{(Reification)}
Let $\A$ be a probabilistic automaton.

A sequence $(u_n)\nN$ of words reifies a Boolean matrix $m$ if
for all states $s,t \in Q$, the sequence $\left(\prob{\A}(s \xrightarrow{u_n} t)\right)\nN$ converges and:
$$m(s,t) = 1 \iff \lim_n \prob{\A}(s \xrightarrow{u_n} t) > 0.$$
\end{definition}

\begin{proposition}{(Characterisation of the Markov Monoid algorithm)}
\label{prop:characterisation}
For every $\omega$-expression $E$,
for every $\phi : A \to \matS{\R}$, we have 
$$\boo{\widehat{\phi}(\overline{E})} = \tr{E}.$$

Consequently, for every probabilistic automaton $\A$:
\begin{itemize}
	\item any sequence inducing the polynomial prostochastic word $\overline{E}$ reifies $\tr{E}$,
	\item the element $\tr{E}$ of the Markov Monoid is a value $1$ witness
if, and only if, the polynomial prostochastic word $\overline{E}$ is accepted by $\A$.
\end{itemize}
\end{proposition}

\begin{proof}
We prove the first part of Proposition~\ref{prop:characterisation} by induction on the $\omega$-expression $E$,
which essentially amounts to gather the results from Section~\ref{sec:prostochastic}.

\medskip
The base case of $a \in A$ is clear.

\textbf{The product case:} let $E = E_1 \cdot E_2$, and $\phi : A \to \matS{\R}$.

By definition $\overline{E} = \overline{E_1} \cdot \overline{E_2}$,
so $\widehat{\phi}(\overline{E}) = \widehat{\phi}(\overline{E_1}) \cdot \widehat{\phi}(\overline{E_2})$
because $\widehat{\phi}$ is a morphism,
and $\boo{\widehat{\phi}(\overline{E})} = \boo{\widehat{\phi}(\overline{E_1})} \cdot \boo{\widehat{\phi}(\overline{E_2})}$.
Also by definition, we have $\tr{E} = \tr{E_1} \cdot \tr{E_2}$,
so $\boo{\widehat{\phi}(\overline{E_1 \cdot E_2})} = \tr{E_1 \cdot E_2}$.
 
\textbf{The iteration case:} let $E = F^\omega$, and $\phi : A \to \matS{\R}$.

By definition, $\overline{E} = \overline{F}^\omega$,
so $\widehat{\phi}(\overline{E}) = \widehat{\phi}(\overline{F}^\omega)$,
which is equal to $\widehat{\phi}(\overline{F})^\omega$ thanks to Lemma~\ref{lem:limit_fast}.
Now, $\boo{\widehat{\phi}(\overline{F})^\omega} = \boo{\widehat{\phi}(\overline{F})}^\sharp$ thanks to Theorem~\ref{thm:powers_stochastic_matrix}.
By induction hypothesis, $\boo{\widehat{\phi}(\overline{F})} = \tr{F}$,
so $\boo{\widehat{\phi}(\overline{F}^\omega)} = \tr{F^\omega}$.

\bigskip
We prove the second part.
Consider a sequence $\mathbf{u}$ inducing the polynomial prostochastic word $\overline{E}$.
Thanks to the first item, $\boo{\widehat{\phi}(\overline{E})} = \tr{E}$,
implying that $\boo{\lim \phi(\mathbf{u})} = \tr{E}$, which means that $\mathbf{u}$ reifies $\overline{E}$.

\bigskip
We prove the third part.

Assume that $\tr{E}$ is a value $1$ witness, \textit{i.e.} for all states $t \in Q$, if $\tr{E}(q_0,t) = 1$, then $t \in F$.
So for $t \notin F$, we have $\lim \phi(\mathbf{u})(q_0,t) = 0$.
Since we have $\lim \phi(\mathbf{u})(q_0,Q) = 1$, 
it follows that $\lim \phi(\mathbf{u})(q_0,F) = 1$,
so $\widehat{\phi}(\overline{E})(q_0,F) = 1$, 
\textit{i.e.} the polynomial prostochastic word $\overline{E}$ is accepted by $\A$.

Conversely, assume that the polynomial prostochastic word $\overline{E}$ is accepted by $\A$.
Since it is induced by $\mathbf{u}$, it follows that $\lim \phi(\mathbf{u})(q_0,F) = 1$.
Consider a state $t \in Q$ such that $\tr{E}(q_0,t) = 1$.
It follows that $\boo{\lim \phi(\mathbf{u})}(q_0,t) = 1$, so $\lim \phi(\mathbf{u})(q_0,t) > 0$.
Since $\lim \phi(\mathbf{u})(q_0,F) = 1$, this implies that $t \in F$, hence $\tr{E}$ is a value $1$ witness.
\hfill\qed\end{proof}

\section{Towards an Optimality Argument}
\label{sec:conclusions}

In this section, we build on the characterisation obtained above to argue that the Markov Monoid algorithm is in some sense optimal.

\subsection{Undecidability of the Two-Tier Value 1 Problem}
\label{subsec:conclusions:two-tier}
\begin{theorem}{(Undecidability of the two-tier value $1$ problem)}
\label{thm:two_tier}
The following problem is undecidable: given a probabilistic automaton $\A$,
determine whether there exist two finite words $u,v$ such that $\lim_n \prob{\A}((u \cdot v^n)^{2^n}) = 1$.
\end{theorem}

\begin{figure}[ht]
\begin{center}
\includegraphics[scale=.76]{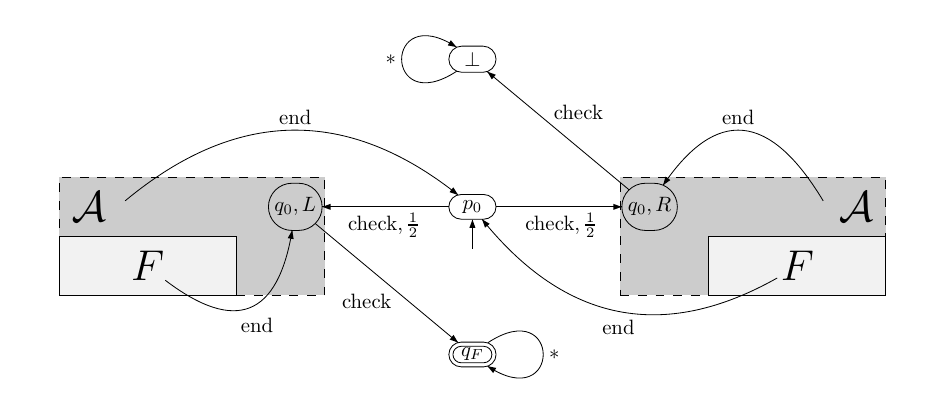}
\caption{\label{fig:reduction} Reduction.}
\end{center}
\end{figure}

The two-tier value $1$ problem seems easier than the value $1$ problem as it restricts the set of sequences of finite words to very simple sequences.
We call such sequences two-tier, because they exhibit two different behaviours: the word $v$ is repeated a linear number of times, namely $n$,
while the word $u \cdot v^n$ is repeated an exponential number of times, namely $2^n$.

The proof is obtained using the same reduction as for the undecidability of the value $1$ problem, from~\cite{GimbertOualhadj10},
with a refined analysis.

\begin{proof}
We construct a reduction from the emptiness problem for probabilistic automata to the two-tier value $1$ problem.
For technical reasons, we will assume that the probabilistic automata have transition probabilities $0,\frac{1}{2}$, or $1$.

Let $\A$ be a probabilistic automaton.
We construct a probabilistic automaton $\B$ such that the following holds:
\begin{center}
there exists a finite word $w$ such that $\prob{\A}(w) > \frac{1}{2}$ if, and only if,\\
there exist two finite words $u,v$ such that $\lim_n \prob{\B}((u \cdot v^n)^{2^n}) = 1$.
\end{center}
The emptiness problem for probabilistic automata has been shown undecidable in~\cite{Paz71}.
We refer to~\cite{GimbertOualhadj10} for a simple proof of this result.

Without loss of generality we assume that the initial state $q_0$ of $\A$ has no ingoing transitions.

The alphabet of $\B$ is $B = A \uplus \set{\chck,\ed}$,
its set of states is $Q_\B = Q \times \set{L,R}\ \uplus\ \set{p_0,\bot,q_F}$,
its transition function is $\phi'$,
the only initial state is $p_0$ and the only final state is $q_F$.
We describe $\phi'$ as a function $\phi' : Q_\B \times B \to \D(Q_\B)$:
$$\begin{cases}
\phi'(p_0,\chck) & = \frac{1}{2} \cdot (q_0,L) + \frac{1}{2} \cdot (q_0,R) \\
\phi'((q,d),a) & = (\phi(q,a),d) \textrm{ for } a \in A \textrm{ and } d \in \set{L,R}\\
\phi'((q_0,L),\chck) & = q_F \\
\phi'((q,L),\ed) & = q_0 \textrm{ if } q \in F \\
\phi'((q,L),\ed) & = p_0 \textrm{ if } q \notin F \\
\phi'((q_0,R),\chck) & = \bot \\
\phi'((q,R),\ed) & = p_0 \textrm{ if } q \in F \\
\phi'((q,R),\ed) & = q_0 \textrm{ if } q \notin F \\
\phi'(q_F,*) & = q_F
\end{cases}$$
where as a convention, if a transition is not defined, it leads to $\bot$.

Assume that there exists a finite word $w$ such that $\prob{\A}(w) > \frac{1}{2}$,
then we claim that $\lim_n \prob{\B}((\chck \cdot (w \cdot \ed)^n)^{2^n}) = 1$.
Denote $x = \prob{\A}(w)$.

We have
\[
\prob{\A}(p_0 \xrightarrow{\chck \cdot (w \cdot \ed)^n} (q_0,L)) = \frac{1}{2} \cdot x^n,
\]
and
\[
\prob{\A}(p_0 \xrightarrow{\chck \cdot (w \cdot \ed)^n} (q_0,R)) = \frac{1}{2} \cdot (1-x)^n.
\]

We fix an integer $N$ and analyse the action of reading $(\chck \cdot (w \cdot \ed)^n)^N$: 
there are $N$ ``rounds'',
each of them corresponding to reading $\chck \cdot (w \cdot \ed)^n$ from $p_0$.
In a round, there are three outcomes: 
winning (that is, remaining in $(q_0,L)$) with probability $p_n = \frac{1}{2} \cdot x^n$,
losing (that is, remaining in $(q_0,R)$) with probability $q_n = \frac{1}{2} \cdot (1-x)^n$, 
or going to the next round (that is, reaching $p_0$) with probability $1 - (p_n + q_n)$.
If a round is won or lost, then the next $\chck$ leads to an accepting or rejecting sink; otherwise it goes on to the next round, for $N$ rounds. 
Hence:
\begin{eqnarray*}
\lefteqn{\prob{\A}((\chck \cdot (w \cdot \ed)^n)^N)} \\
&& 
= \sum_{i = 1}^{N-1} (1 - (p_n + q_n))^{i-1} \cdot p_n \\
&& 
= p_n \cdot \frac{1 - (1 - (p_n + q_n))^{N-1}}{1 - (1 - (p_n + q_n))} \\
&& 
= \frac{1}{1 + \frac{q_n}{p_n}} \cdot \left(1 - (1 - (p_n + q_n))^{N-1}\right)
\end{eqnarray*}

First, $\frac{q_n}{p_n} = (\frac{1-x}{x})^n$ converges to $0$ as $n$ goes to infinity since $x > \frac{1}{2}$.

Denote $N = f(n)$ and consider the sequence $((1 - (p_n + q_n))^{N-1})\nN$;
if $(f(n) \cdot x^n)\nN$ converges to $\infty$, then the above sequence converges to $0$.
Since $x > \frac{1}{2}$, this holds for $f(n) = 2^n$.
It follows that the acceptance probability converges to $1$.
Consequently: 
$$\lim_n \prob{\A}((\chck \cdot (w \cdot \ed)^n)^{2^n}) = 1.$$

\vskip1em
Conversely, assume that for all finite words $w$, we have $\prob{\A}(w) \le \frac{1}{2}$.
We claim that every finite word in $B^*$ is accepted by $\B$ with probability at most $\frac{1}{2}$.
First of all, using simple observations we restrict ourselves to words of the form
$$w = \chck \cdot w_1 \cdot \ed \cdot w_2 \cdot \ed \cdots\ w_n \cdot \ed \cdot w',$$
with $w_i \in A^*$ and $w' \in B^*$.
Since $\prob{\A}(w_i) \le \frac{1}{2}$ for every $i$, it follows that in $\B$, after reading the last letter $\ed$ in $w$ before $w'$,
the probability to be in $(q_0,L)$ is smaller or equal than the probability to be in $(q_0,R)$.
This implies the claim.
It follows that the value of $\B$ is not $1$, and in particular for two finite words $u,v$, we have $\lim_n \prob{\B}((u \cdot v^n)^{2^n}) < 1$.
\hfill\qed\end{proof}

Note that the proof actually shows that one can replace in the statement of Theorem~\ref{thm:two_tier} the value $2^n$ by $f(n)$ 
for any function $f$ such that $f(n) \ge 2^n$.

\subsection{Combining the Two Results}
\label{subsec:conclusions:optimality}
It was shown in~\cite{FijalkowGimbertKelmendiOualhadj15} that the Markov Monoid algorithm subsumes all previous known algorithms to solve the value $1$ problem.
Indeed, it was proved that it is correct for the subclass of leaktight automata, and that the class of leaktight automata strictly
contains all subclasses for which the value $1$ problem has been shown to be decidable.

\begin{figure}[ht]
\begin{center}
\includegraphics[scale=.7]{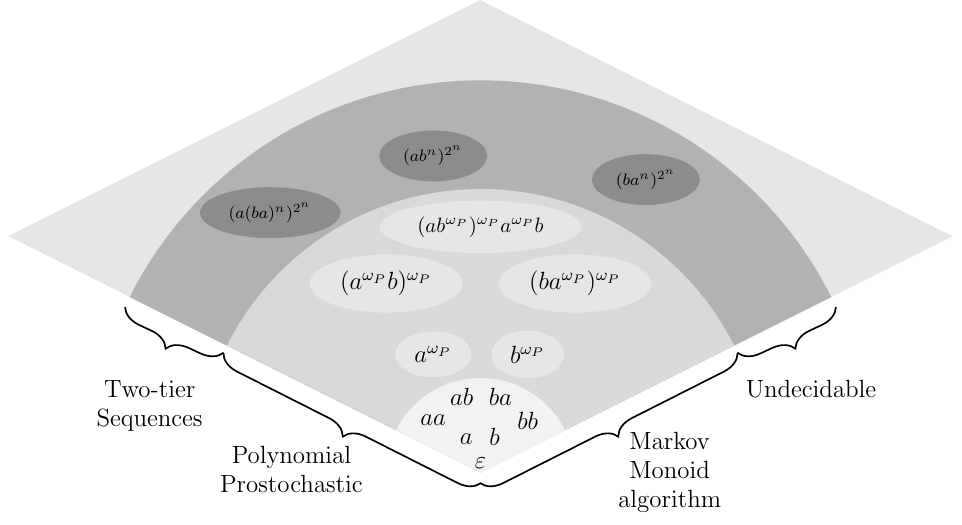}
\caption{\label{fig:free_prostochastic_monoid} Optimality of the Markov Monoid algorithm.}
\end{center}
\end{figure}

\bigskip
At this point, the Markov Monoid algorithm is the best algorithm \textit{so far}. But can we go further?
If we cannot, then what is an optimality argument? 
It consists in constructing 
a maximal subclass of probabilistic automata
for which the problem is decidable.
We can reverse the point of view, and equivalently construct 
an optimal algorithm, \textit{i.e.} an algorithm
that correctly solves a subset of the instances,
such that no algorithm correctly solves a superset of these instances.
However, it is clear that no such strong statement holds,
as one can always from any algorithm obtain a better algorithm 
by precomputing finitely many instances.

\bigskip
Since there is no strong optimality argument, we can only give a subjective argument.
We argue that the combination of our characterisation from Section~\ref{sec:mma}
and of the undecidability of the two-tier value $1$ problem from Subsection~\ref{subsec:conclusions:two-tier} 
supports the claim that the Markov Monoid algorithm is \textit{in some sense} optimal:
\begin{itemize}
	\item The characterisation says that the Markov Monoid algorithm captures exactly all polynomial behaviours.
	\item The undecidability result says that the undecidability of the value $1$ problem arises when polynomial and exponential behaviours are combined.
\end{itemize}
So, the Markov Monoid algorithm is optimal in the sense that it captures a \textit{large} set of behaviours, namely polynomial behaviours,
and that no algorithm can capture both polynomial and exponential behaviours.

\bibliographystyle{plain}
\bibliography{bib,perso}

\begin{thebibliography}{10}

\bibitem{Almeida05}
Jorge Almeida.
\newblock Profinite semigroups and applications.
\newblock {\em Structural Theory of Automata, Semigroups, and Universal
  Algebra}, 207:1--45, 2005.

\bibitem{ChatterjeeTracol12}
Krishnendu Chatterjee and Mathieu Tracol.
\newblock Decidable problems for probabilistic automata on infinite words.
\newblock In {\em LICS}, pages 185--194, 2012.

\bibitem{Fijalkow16}
Nathana{\"{e}}l Fijalkow.
\newblock Characterisation of an algebraic algorithm for probabilistic
  automata.
\newblock In {\em STACS}, pages 34:1--34:13, 2016.

\bibitem{FijalkowGimbertKelmendiOualhadj15}
Nathana{\"{e}}l Fijalkow, Hugo Gimbert, Edon Kelmendi, and Youssouf Oualhadj.
\newblock Deciding the value 1 problem for probabilistic leaktight automata.
\newblock {\em Logical Methods in Computer Science}, 11(1), 2015.

\bibitem{FijalkowGimbertOualhadj12}
Nathana{\"{e}}l Fijalkow, Hugo Gimbert, and Youssouf Oualhadj.
\newblock Deciding the value 1 problem for probabilistic leaktight automata.
\newblock In {\em LICS}, pages 295--304, 2012.

\bibitem{GehrkeGrigorieffPin10}
Mai Gehrke, Serge Grigorieff, and Jean{-}{\'E}ric Pin.
\newblock A topological approach to recognition.
\newblock In {\em ICALP (2)}, pages 151--162, 2010.

\bibitem{GimbertOualhadj10}
Hugo Gimbert and Youssouf Oualhadj.
\newblock Probabilistic automata on finite words: Decidable and undecidable
  problems.
\newblock In {\em ICALP (2)}, pages 527--538, 2010.

\bibitem{LevinPeresWilmer08}
David~A. Levin, Yuval Peres, and Elizabeth~L. Wilmer.
\newblock {\em Markov Chains and Mixing Times}.
\newblock American Mathematical Society, 2008.

\bibitem{Paz71}
Azaria Paz.
\newblock {\em Introduction to Probabilistic Automata}.
\newblock Academic Press, 1971.

\bibitem{Pin09}
Jean{-}{\'E}ric Pin.
\newblock Profinite methods in automata theory.
\newblock In {\em STACS}, pages 31--50, 2009.

\bibitem{Rabin63}
Michael~O. Rabin.
\newblock Probabilistic automata.
\newblock {\em Information and Control}, 6(3):230--245, 1963.

\bibitem{Torunczyk11}
Szymon Toru{\'n}czyk.
\newblock {\em Languages of Profinite Words and the Limitedness Problem}.
\newblock PhD thesis, University of Warsaw, 2011.

\end{thebibliography}

\end{document}